\theoremstyle{definition}
\newtheorem{fact}{Fact}
\newcommand{\set}[1]{\{#1\}}
\newcommand{\sd}{\,|\,} 
\newcommand{\struct}[1]{\mathcal{#1}} 
\title{
  Corrections to the results derived in ``A Unified Approach to Algorithms Generating 
Unrestricted and Restricted Integer Compositions
and Integer Partitions''; and a comparison of four restricted integer composition generation algorithms 
} 
\author{Steffen Eger}
\begin{document}
\maketitle
\begin{abstract}
In this note, 
I discuss results on integer compositions/partitions given in the paper ``A Unified Approach to Algorithms Generating
Unrestricted and Restricted Integer Compositions
and Integer Partitions''. I also experiment with four different
generation algorithms for restricted integer compositions and find the
algorithm designed in the named paper to be pretty slow, comparatively.\\
 Some of my comments may be subjective.
\end{abstract}
\textbf{Keywords} restricted integer composition; restricted integer
partition; generation algorithm

\section{Introduction}
A few years ago, I became interested in the subject of
\emph{restricted integer compositions} because they arise in the
context of generalized sequence alignment algorithms. In particular, I
wanted a possibly fast algorithm for generating all compositions of an
integer $n$ with $k$ parts, each in the discrete interval
$\set{a,a+1,\ldots,b}$. Googling, I found both a Matlab implementation
and a paper reference, \cite{opdyke}. In its abstract, the algorithm
was praised as ``reasonably
fast with good time complexity'' and also as solving the ``open problem
of counting the number of integer compositions doubly restricted in
this manner''. 
After some experimentation, however, it appeared to me that 
not only is the 
discussed 
algorithm slow 
but, in
addition, the paper makes many (at best) misleading 
statements 
concerning (mathematical) results on restricted
integer compositions. 

In this note, I outline my objections to the Opdyke algorithm
and the paper's theoretical results and explain why I think the
algorithm and its underlying results are 
problematic. 
I first briefly 
summarize my points of critique, before I
introduce notation and definitions and, subsequently, detail my
concerns. Finally, I run the algorithm and compare it with
other (simple and not so simple) algorithms for generating restricted
(and unrestricted) integer compositions, where each part lies in an
arbitrary interval as outlined above. These experiments reveal that,
in fact, the algorithm's run time appears to be pretty bad,
exponentially worse (in one of the input parameters) than a competitor
algorithm's designed in \cite{vajn}. 

\begin{enumerate}
  \item Paper \cite{opdyke} claims to provide ``closed form solution[s]
    to the open problem of counting the number of [doubly
      restricted] integer compositions''. I argue that, on the
    contrary, the recursions that the paper indicates as closed-form
    solutions are (1) elementary, (2) long known, and (3) special
    cases of results developed in the literature paper 
    \cite{opdyke} cites. 
  \item Paper \cite{opdyke} claims to generalize earlier approaches to
    the restricted integer composition/partition problem by allowing
    \underline{both lower and upper bounds on the \textbf{value} of parts} in integer
    compositions/partitions. I argue that, on the contrary, general
    lower and upper bounds, $a$ and $b$, may, without loss of
    generality, be reduced to the special case $a=0$. Thus, in this
    respect, paper \cite{opdyke} is just as specific as other work in
    this field.
  \item Paper \cite{opdyke} claims to generalize earlier approaches to
    the restricted integer composition/partition problem by allowing
    \underline{both lower and upper bounds on the \textbf{number} of
      parts} in integer 
    compositions/partitions. I argue that, on the contrary, this is
    merely a trick to feign generality. What the paper does is simply
    to provide a \emph{wrapper function} (sic!) around its integer
    composition/partition generation module, invoking it with
    different parameter values for the number of parts.
  \item Paper \cite{opdyke} claims that its outlined algorithm is,
    ``given its generality, [...] reasonably fast with good time
    complexity''. I argue that, on the contrary, the paper's
    methodology is not general at all (as outlined), and hence, its
    time complexity is bad. 
  \item Paper \cite{opdyke} claims that it ``unifies'' the generation
    approach to the integer composition/partition problem. I argue
    that such a `unification' need hardly be surprising given that
    compositions and partitions are so closely related. I illustrate by
    providing other links between compositions and partitions. 
\end{enumerate}

\bigskip

An \emph{integer composition} of a nonnegative integer $n$ is a tuple
$(\pi_1,\dotsc,\pi_k)$ of nonnegative integers such that
$n=\pi_1+\cdots+\pi_k$.\footnote{Sometimes, the literature
  distinguishes between \emph{weak compositions} and
  \emph{compositions} but I will not do so. 
  I
  consider a generalized concept of integer compositions where parts
  may lie in some arbitrary subset of the nonnegative integers (one
  could allow for the whole of $\mathbb{Z}$).} The $\pi_i$'s are
usually 
called the 
\emph{parts} of the composition.
We 
call an integer composition \emph{$A$-restricted}, for a subset $A$ of the
nonnegative integers, if each part lies in $A$. If $\pi_1\ge
\pi_2\ge\cdots\ge \pi_k$, then $(\pi_1,\dotsc,\pi_k)$ is called an
\emph{integer partition}. We denote by $\struct{C}_{A}(n,k)$ the
set of restricted integer compostions of $n$ with fixed number $k$ of
parts, each in the set $A$. Analogously, we denote by
$\struct{P}_A(n,k)$ the set of restricted integer partitions of $n$
with fixed number $k$ of parts, each in the set $A$. By $c_A(n,k)$ and
$p_A(n,k)$, we denote the respective cardinalities. Throughtout, we
typically consider $A=[a,b]=\set{a,a+1,\dotsc,b}$. In the latter
cases, we also write $c(n,k,a,b)$ and $p(n,k,a,b)$, respectively. 

\bigskip

I now address several issues discussed in paper \cite{opdyke}.

\begin{enumerate}
  \item \textbf{The counting formulae}: In the abstract of the paper,
    one reads (bold added by myself)
    \begin{quote}
      A general, closed form solution to the \textbf{open problem
      of counting the number of integer compositions doubly restricted
      in this manner also is presented}; [...]
\end{quote}
    and, on p.67,
\begin{quote}
Formulae (2), (3), and (4) mirror the analogous solutions for counting
doubly restricted 
integer partitions presented later in the paper. Although their recursive
nature makes these formulae less convenient than, say, a simple combinatoric
equation or sum, \textbf{they still provide closed form solutions to
  problems which had 
none before}, and their calculation is not onerous.
\end{quote}
On p.77, the paper continues
\begin{quote}
  [...] which is probably why their important link to the \textbf{completely
  original, analogous solutions of (2), (3), and (4) for compositions
  has been missed until now}. 
\end{quote}

Denoting by $c(n,a,b)$ the number
of integer compositions of $n$ with arbitrary number of parts, each between
$a$ and $b$, and 
by $c(n,k_0\le k\le k_1,a,b)$ the
number of integer compositions of $n$ with $k$ parts, for $k_0\le k\le
k_1$, each between $a$ and $b$, 
\textbf{these closed form solutions are} (we shift equation numbers to match
those of the paper in question):
\addtocounter{equation}{1}
\begin{align}
  \label{eq:1}
  c(n,a,b) &= I(n\le b) + \sum_{i=\max\set{1,n-b}}^{n-a}c(i,a,b),\\
  \label{eq:2}
  c(n,k,a,b) &= \sum_{i=\max\set{1,n-b}}^{n-a}c(i,k-1,a,b),\\
  \label{eq:3}
  c(n,k_0\le k\le
  k_1,a,b)&=\sum_{k=k_0}^{k_1}\sum_{i=\max\set{1,n-b}}^{n-a}c(i,k-1,a,b). 
\end{align}
Here, $I(\cdot)$ denotes the indicator function, which is $1$ or $0$,
depending on whether the expression in brackets is true or not. 
\textbf{I argue that these three results are (1)
\emph{elementary}, 
(2) \emph{long known}, 
and (3) \emph{given in the 
  references the 
paper under scrutiny, \cite{opdyke}, cites}.} 

First, for Equation \eqref{eq:3}, there is nothing to prove since this
formula is, by definition, just the sum, over the number of parts, of
the formula
given in Equation \eqref{eq:2}. 
To prove results \eqref{eq:1} and \eqref{eq:2}, 
note that any
$A$-restricted integer composition of $n$ is obtained by adding $x$ to
a composition of $n-x$, for $x\in A$. In other words, denoting by
$c_A(n)$ the number of $A$-restricted integer compositions, 
we have:
\begin{align}
  \label{eq:cA}
  c_A(n) &= \sum_{x\in A}c_A(n-x),\\
  \label{eq:cAk}
  c_A(n,k) &= \sum_{x\in A}c_A(n-x,k-1).
\end{align}
Hence, if we specialize to
$A=\set{a,a+1,\ldots,b}$, Equations \eqref{eq:1} and \eqref{eq:2} are
retrieved. This shows that all three equations are elementary. 

To show that the equations are long known, Equation \eqref{eq:2}
is, for example, given in \cite{abramson}, Formula (5.4), and Equation
\eqref{eq:1} is, for example, given in \cite{hogatt}, Formula (4.6).

To show that Equations \eqref{eq:1} and \eqref{eq:2} are given in the
references of paper \cite{opdyke}, note that \eqref{eq:1} is given in
\cite{kimberling}, Lemma 3.1, and \eqref{eq:2} is given in
\cite{heubach}, proof of Theorem (2.1).  

Concerning the counting formulae for restricted integer partitions, I
admit that I 
am not familiar with the literature on integer partitions. However,
the formulae that paper \cite{opdyke} derives 
probably are given in any work on the topic (the author cites
\cite{andrews} as a reference). Namely, the formulae
are
\begin{align}
  \label{eq:7}
  p(n,a,b) &= I(n\le b) + \sum_{i=\max\set{1,n-b}}^{n-a}c(i,n-i,b),\\
  \label{eq:8}
  p(n,k,a,b) &= \sum_{i=\max\set{1,n-b}}^{n-a}p(i,k-1,n-i,b)
\end{align}
(I omit the formula that sums over different parts because it is
trivial). 
Deriving \eqref{eq:8} (and \eqref{eq:7}) is also simple. Each
partition must end, in its final part, with a number $x$, for some
$x\in A$. Since $x$ is (weakly) the smallest part of the partition,
the remaining $k-1$ 
parts must have size at least $x$ and they must sum to $n-x$. Hence,
\begin{align*}
  p_A(n,k) = \sum_{x\in A}p_{A_x}(n-x,k-1), 
\end{align*}
where $A_x=\set{y\in A\sd y\ge x}$. This generalizes formula
\eqref{eq:8}. Formula \eqref{eq:7} is completely analogous.

\item \textbf{Lower \textsc{and} upper bound restrictions?} Is it
  necessary to consider both \textbf{lower and upper bounds} in integer
  compositions and partitions? A well-known fact of restricted
  compositions and partitions is the
  following (see \cite{kimberling}, who states this
  as well-known, without proof, only for compositions; however, partitions
  are of course 
  analogous in this respect).
  
  \begin{fact}\label{fact:1}
    There exists a bijection $f$ between $\struct{C}_{[a,b]}(n,k)$ and
    $\struct{C}_{[0,b-a]}(n-ka,k)$, and there exists a bijection $g$
    between $\struct{P}_{[a,b]}(n,k)$ and $\struct{P}_{[0,b-a]}(n-ka,k)$.
  \end{fact}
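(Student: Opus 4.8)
The plan is to realize the bijection concretely as the ``shift by $a$'' map that subtracts $a$ from every part. Define
\[
f(\pi_1,\dotsc,\pi_k)=(\pi_1-a,\dotsc,\pi_k-a)
\]
and take $g$ to be given by the same formula on partitions. First I would verify that $f$ is well defined, i.e.\ that it actually maps into the claimed target set: if each $\pi_i\in[a,b]$ then each $\pi_i-a\in[0,b-a]$, and the shifted parts sum to $\sum_{i=1}^k(\pi_i-a)=n-ka$, so $f(\pi_1,\dotsc,\pi_k)$ is a composition of $n-ka$ into $k$ parts each lying in $[0,b-a]$, as required.

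Rather than proving injectivity and surjectivity separately, I would simply exhibit the inverse. The obvious candidate is the map $h(\sigma_1,\dotsc,\sigma_k)=(\sigma_1+a,\dotsc,\sigma_k+a)$ that adds $a$ back to each part; the same bookkeeping shows $h$ carries $\struct{C}_{[0,b-a]}(n-ka,k)$ into $\struct{C}_{[a,b]}(n,k)$. Since applying $f$ then $h$ (or $h$ then $f$) returns the original tuple coordinate by coordinate, $f$ and $h$ are mutually inverse, and $f$ is therefore a bijection.

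For $g$ the only additional point is that the partition ordering must be respected, and this is the one place where a moment's care is warranted---though it is still immediate: subtracting the constant $a$ from every coordinate preserves all weak inequalities, so $\pi_1\ge\cdots\ge\pi_k$ holds precisely when $\pi_1-a\ge\cdots\ge\pi_k-a$ does. Hence the same formula restricts from compositions to partitions, with $h$ restricting to its inverse. There is no genuine obstacle here; the entire content of the statement is the observation that a uniform shift by $a$ across all $k$ parts changes the total by $ka$ and the admissible range from $[a,b]$ to $[0,b-a]$, and the only thing to keep straight is this bookkeeping of the target parameters $n-ka$ and $b-a$.
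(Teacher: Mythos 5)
Your proposal is correct and uses essentially the same argument as the paper: the shift-by-$a$ map, with bijectivity established via the add-$a$-back map (the paper phrases this as injectivity plus surjectivity, but its surjectivity step constructs exactly your inverse $h$). The only cosmetic difference is that you treat compositions first and restrict to partitions, while the paper does partitions first and notes that order played no role.
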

  \begin{proof}
    Let
    $\pi=(\pi_1,\ldots,\pi_k)\in \struct{P}_{[a,b]}(n,k)$. Let 
    \begin{align*}
      g(\pi) = (\pi_1-a,\ldots,\pi_k-a).
    \end{align*}
    Of course, $g(\pi)\in \struct{P}_{[0,b-a]}(n-ka,k)$. Also, if
    $\pi\neq \pi'$, then clearly $g(\pi)\neq g(\pi')$. Finally, let
    $\tau=(\tau_1,\ldots,\tau_k)$ be any element from
    $\struct{P}_{[0,b-a]}(n-ka,k)$.  Then,
    $\tau'=(\tau_1+a,\ldots,\tau_k+a)\in \struct{P}_{[a,b]}(n,k)$ and
    $g(\tau')=\tau$. Hence, $g$ is injective and surjective, and
    consequently also bijective. 

    Since order didn't matter for our argument, the same conclusion
    holds for compositions. 
  \end{proof}
  Fact \ref{fact:1} states that --- at least from a mathematical
  perspective --- it suffices to consider the restricted
  integer/partition problem \emph{only with upper bounds} and lower
  bound $a=0$. From a computational perspective, if an algorithm
  $A(n,k,0,c)$ is given which generates all compositions/partitions of
  $n$ with 
  $k$ parts, each between $0$ and $c$, and which takes time $O(k)$ per
  composition (as the Opdyke algorithm claims it does), then there always also
  exists an $O(k)$ algorithm which generates all
  compositions/partitions of $n$ 
  with $k$ parts, each between some lower bound $a\ge 0$ and some
  upper bound $b\ge a$: Simply add $a$ to each part of each
  composition/partition that $A(n-ka,k,0,b-a)$ outputs. 

\item \textbf{Restrictions on part \textsc{number}?} Paper
  \cite{opdyke} claims that it allows for another generality: Allowing
  to compute all compositions/partitions of $n$ with $k$ parts, each
  between $a$ and $b$, \textbf{where $k$ ranges from some $k_{\text{Min}}$ to
  some $k_{\text{Max}}$}. This may be an interesting problem, but not
  so if the solution is to apply the original algorithm $A(n,k,a,b)$
  in the form:
  \begin{align*}
    A(n,k_{\text{Min}},a,b),\: A(n,k_{\text{Min}+1},a,b),\:\ldots,\: A(n,k_{\text{Max}},a,b),
  \end{align*}
  that is, if the original algorithm $A(n,k,a,b)$ is invoked simply
  with different input arguments for the parameter $k$. Doing it in
  this way generates no additional efficiencies and is also
  independent of algorithm $A$ --- a faster algorithm would be a
  better choice for $A$ than a slower one. 

  In fact, to actually understand the dimension of the suggested
  approach of simply invoking $A(n,k,a,b)$ for different values of
  $k$, consider another trivial generalization of the mentioned
  type. We could, for example, define the \textbf{quadruply restricted
  integer composition/partition problem} of generating all integer
  compositions/partitions of $n\in N$ with parts $k\in K$, lower
  bounds $a\in A$  
  and upper bounds $b\in B$, where $N,K,A,B$ are arbitrary sets. This
  may be an interesting problem, but solving it via the algorithm
\begin{verbatim}
for n in N
  for k in K
    for a in A
       for b in B
           A(n,k,a,b)
\end{verbatim}
is simply a trivial solution that is not worth mentioning. Again,
faster algorithms $A(n,k,a,b)$ should then always be preferred over
slower algorithms $A(n,k,a,b)$. 

\item \textbf{Speed?}
As mentioned, the algorithm designed in \cite{opdyke} is slow. It
takes time $O(k)$ per composition and is therefore inefficient (see,
e.g., \cite{vajn}). 

\item \textbf{A ``generalized'' and ``unified'' approach?}
In 
mathworks comments, the author of \cite{opdyke} argued that
while his algorithm is slow, it trades this off by generality: It
solves a generalized problem with varying number of parts and
arbitrary upper and lower bounds. However, I have outlined that these
are 
not generalizations. In essence, thus, the algorithm is as specific as
any algorithm that solves the restricted integer composition/partition
problem. Only, it is so at a worse runtime. 

Besides this, the author claimed that his approach is ``unified'' in
the sense that few modifications are necessary to transform the
\emph{composition} algorithm into a \emph{partition} algorithm and
vice versa, and in that his recursions outline fundamental links
between integer compositions and partitions. In my opinion, this is a
weak argument. From a practical 
point of view, I'd rather have two fast and very distinct algorithms
than two slow ones that are very similar, wouldn't I?

From a theoretical
perspective, what is so surprising about two similar algorithms (or, recursions) for
the integer composition and partition problem? After all, compositions
\emph{are ordered partitions}, so similarities, \emph{per se}, should not come as a
surprise. 

To illustrate, note the following property of restricted integer
compositions/partitions. In writing $n=\pi_1+\cdots+\pi_k$, with each
$\pi_i\in 
[a,b]$, one can use $b$ either $0$ times, $1$ time, ..., up to
$\lfloor n/b \rfloor$ times. If one uses $b$ exactly $i$ times (as
first part(s)), one is
left with the problem of solving $n-bi=q_1+\ldots+q_{k-i}$, where
$q_1,\ldots,q_{k-i}\in [a,b-1]$. Hence, restricted integer
partitions satisfy the `recursion'
\begin{align}\label{eq:partition}
  \struct{P}_{[a,b]}(n,k) = \bigcup_{i} \struct{P}_{[a,b-1]}(n-bi,k-i).
\end{align}
If one redistributes the $i$ $b$'s among the total of $k$ parts, one
sees that restricted integer compositions satisfy the `recursion'
\begin{align}\label{eq:fast}
  \struct{C}_{[a,b]}(n,k) = \bigcup_{i} \binom{k}{i}\struct{C}_{[a,b-1]}(n-bi,k-i),
\end{align}
where, sloppy, we let $ \binom{k}{i}\struct{C}_{[a,b-1]}(n-bi,k-i)$
denote the distribution of the $i$ parts among $k$. Hence, the
following recursion formulae exist:
\begin{align*}
  p_{[a,b]}(n,k) &= \sum_i p_{[a,b-1]}(n-bi,k-i),\\
  c_{[a,b]}(n,k) &= \sum_i \binom{k}{i}c_{[a,b-1]}(n-bi,k-i).
\end{align*}
These recursions also immediately `show' a similarity between integer
compostions and partitions and they seem at least as useful to
demonstrate this relationship as are the formulas in \eqref{eq:2} and
\eqref{eq:8}. In fact, these two recursive relationships can
immediately be used for providing an algorithm for the restricted
integer composition/partition problem (incidentally, this was my
hand-coded approach 
that was faster than the algorithm in \cite{opdyke} ...). Still, this
``unifying'' principle among the two recursions is, as it seems to me,
not yet justified in suggesting, on a journal level, yet another
algorithm for the restricted integer composition/partition problem. 

\end{enumerate}

\section{Experiments}
To compare algorithms for generating restricted integer compositions,
we run the following experiments. We generate all restricted integer
compositions of $n$ with $k$ parts, each between $a$ and $b$, via two
na\"ive generation algorithms, as well as via the Opdyke algorithm
\cite{opdyke} and the 
algorithm suggested in \cite{vajn}. The two na\"ive algorithms are:
\begin{itemize}
  \item[(i)] The algorithm that generates all compositions of $n$ with
    $k$ parts, each between $a$ and $b$, by recursively generating all
    compositions of $n-x$ with $k-1$ parts and then adding $x$ to these,
    for $x\in [a,b]$. This is a direct implementation of recursion
    \eqref{eq:cAk}. Note that a na\"ive implementation of the latter
    recursion is clearly inefficient, since it computes the same
    things over and over again, as we illustrate in the generation
    tree in Figure \ref{fig:3}. 
  \item[(ii)] A na\"ive implementation of recursion \eqref{eq:fast}. 
\end{itemize}
Note that all four compared algorithms are fully general in the sense
of paper \cite{opdyke} in that they generate all restricted integer
compositions in the interval $[a,b]$ and in that they can also be
invoked with different values of the number of parts parameter $k$
(what paper \cite{opdyke} calls a `double restriction'). The na\"ive
algorithms can also easily be adapted to generate restricted integer
partitions, as outlined above. 
We abbreviate the algorithms as (V) for the algorithm suggested in
\cite{vajn}, (O) for the Opdyke algorithm, and (6) and (10) for the
algorithms based on direct implementations of recursions \eqref{eq:cAk}
and \eqref{eq:fast}, respectively. All implementations are our own Python
implementations. In the case of algorithms (V) and (O), we
directly implement the pseudo-code given in the respective papers. We
run the experiments on a $2.4$ GHz processor. 

Results are illustrated in Figures
\ref{fig:1} and \ref{fig:2}; throughout, we fix $[a,b]$ to 
$[1,7]$.\footnote{Nature of the results do not depend on $a$ and $b$.}
In Figure \ref{fig:1}, we plot run time as a function of $n$
($n\in[10,22]$), for 
$k=\frac{n}{2}$.  We see that for this middle value of parts
($k=n/2$), the ordering of algorithms in terms of run time is
(V)$<$(10)$<$(O)$<$(6),
which renders the algorithm
(O) suggested in \cite{opdyke} worst, except for the `baseline'
algorithm (6). For $n=22$ and $k=11$, algorithm (V) takes about
$0.89s$, (10) takes $1.42s$, (O) takes $3.68s$ and (10) takes
$6.29s$. 

In Figure \ref{fig:2}, we plot how our results depend on the number of
parts $k$, fixing $n$ at $n=22$. 
We
see that as $k$ is small, the algorithms (V), (O), and (10) are all
roughly equally fast (run time is in fractions of seconds), but, as
$k$ increases, the algorithms (O) and (6) become very bad. For
example, at $k=16$, run times are
\vspace{0.1cm}

\begin{tabular}{c|rr}
  (V)  & $0.10s$& $1$\\
  (10) & $0.27s$& $2.7$\\
  (O)  & $6.53s$& $65.3$\\
  (6)  & $6.15s$ & $61.5$\\
\end{tabular}

\vspace{0.15cm}
\noindent whence (O) is roughly $65$ times slower than (V). In Figure
\ref{fig:3}, we plot the relative run time of (O) in terms of the run
time of 
(V), as a function of $k$. Given the logarithmic scale of the plot, we
note that (O)'s relative run time, with respect to (V), increases
exponentially in $k$, the number of parts. 
\begin{figure}
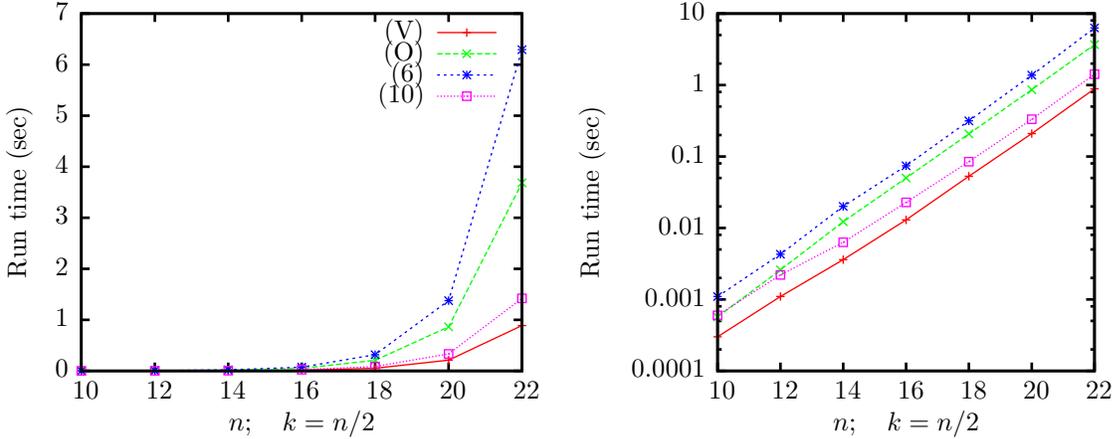

  \input{plots/comparison2_nolog.tex}
  \input{plots/comparison2_nolog_2.tex}
  \caption{Left: Run time of algorithms as a function of $n$, with
    $k=n/2$. Right: Logarithmic scale of left. Throughout: Averages
    over $10$ runs.}
  \label{fig:1}
\end{figure}
\begin{figure}
  \input{plots/comparison1_nolog.tex}
  \input{plots/comparison1_nolog_2.tex}
  \caption{Left: Run time of algorithms as a function of $k$, with
    $n=22$ fixed. Right: Logarithmic scale of left. Throughout: Averages
    over $10$ runs.}
  \label{fig:2}
\end{figure}
\begin{figure}
  \begin{center}
  \input{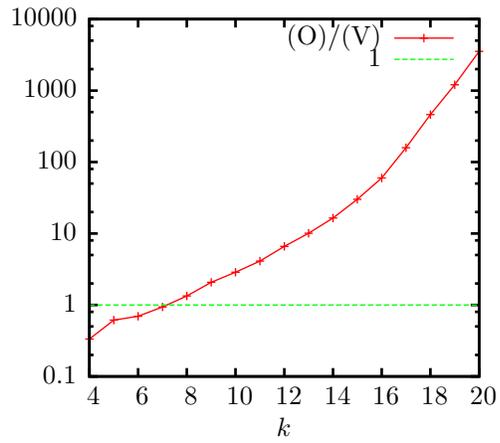}
  \end{center}
  \caption{Ratio of run time of (O) 
    and run
    time of (V), as a function of $k$, with $n$ fixed at $n=22$. The line
    $y=1$ indicates values 
    where
    (O) and (V) have the same run time.}
  \label{fig:3}
\end{figure}

Why is the algorithm designed in \cite{opdyke} so slow? In the end, it is
because it makes too many recursive calls; in particular, it
repeatedly recursively calls itself with the same input parameters, a
feature it shares with algorithm (6). To see this, if recursion
\eqref{eq:cAk} is invoked with some input parameters $n$ and $k$, then
$c_A(n,k)$ will recurse to $c_A(n-a,k-1)$, for $a\in A$. This, in
turn, will recurse to $c_A(n-a-b,k-2)$, for $b\in A$. However, the
algorithm will, in this way, recompute the value $c_A(n-a-b,k-2)$ for
all summations of $a+b$. For instance, if $A=[1,2]$, then $c_A(n,k)$
will call $c_A(n-1,k-1)$ and $c_A(n-2,k-1)$. The former will call
$c_A(n-2,k-2)$ and $\mathbf{c_A(n-3,k-2)}$, while the latter will call
$\mathbf{c_A(n-3,k-2)}$ (again!) and $c_A(n-4,k-2)$ (repetitions in
bold), and so on. Of course, this redundancy is given in each part of the
generation tree, making algorithm (6) highly inefficient. If indeed
the algorithm designed in \cite{opdyke} is based on formula
\eqref{eq:cAk}, it 
is not surprising that it is also highly inefficient. 

In Figure \ref{fig:3}, we show the generation trees of our four
outlined algorithms when invoked with input parameters $n=6$, $k=5$,
and $a=1$, $b=3$; note that $c_{[a,b]}(6,5)=5$. Overall, algorithm (V)
makes $5$ recursive calls (excluding the top node), which is
optimal. Algorithm (10) makes $12$ recursive calls. In contrast,
algorithm (O) makes $19$ recursive calls and algorithm (6) $41$. 

Finally, there is no need to also experiment with the integer
partition generation algorithm designed in \cite{opdyke}, precisely
because it is so similar to the composition generation algorithm and
accordingly shares all its inefficiencies. Algorithm (10), or rather
its partition analogue as outlined above (given in Equation
\eqref{eq:partition}), will surely be more 
efficient and the partition analogue of recursion (6) will be less
efficient. More specialized partition generation algorithms, in turn,
will be superior (even) to algorithm (10). 

\begin{figure}
  \centering
  \includegraphics[scale=0.3]{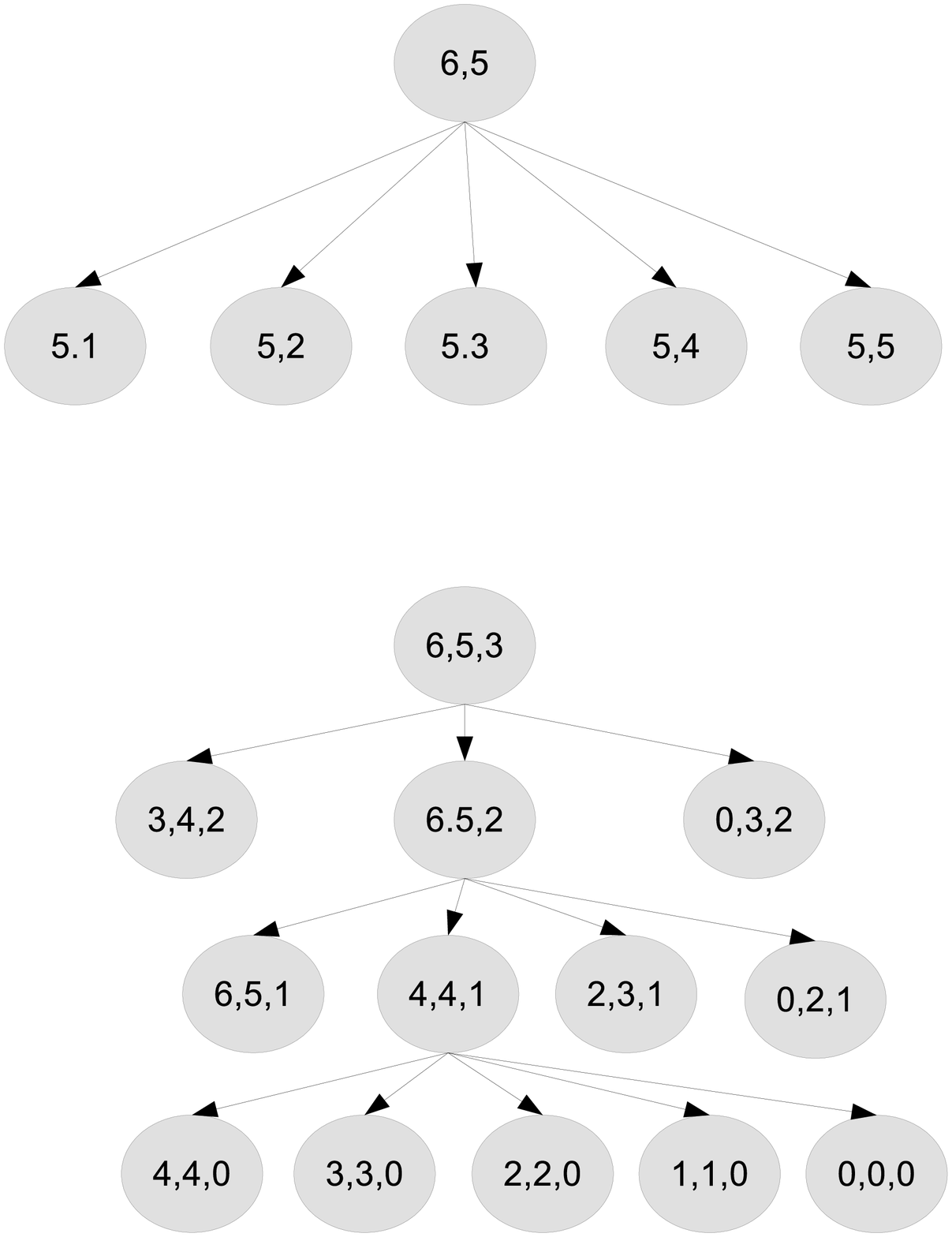}
  \includegraphics[scale=0.3]{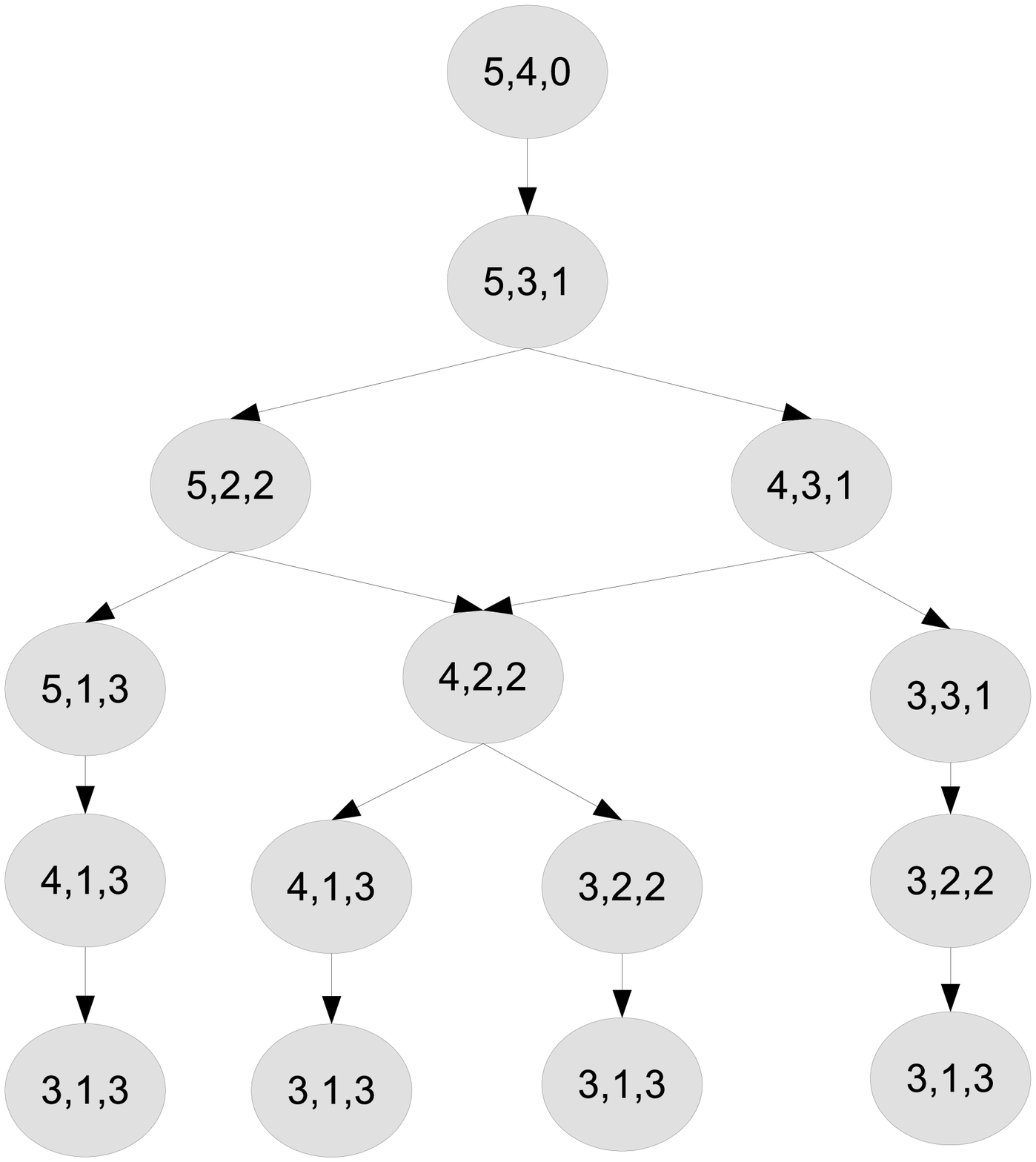}
  \includegraphics[scale=0.3]{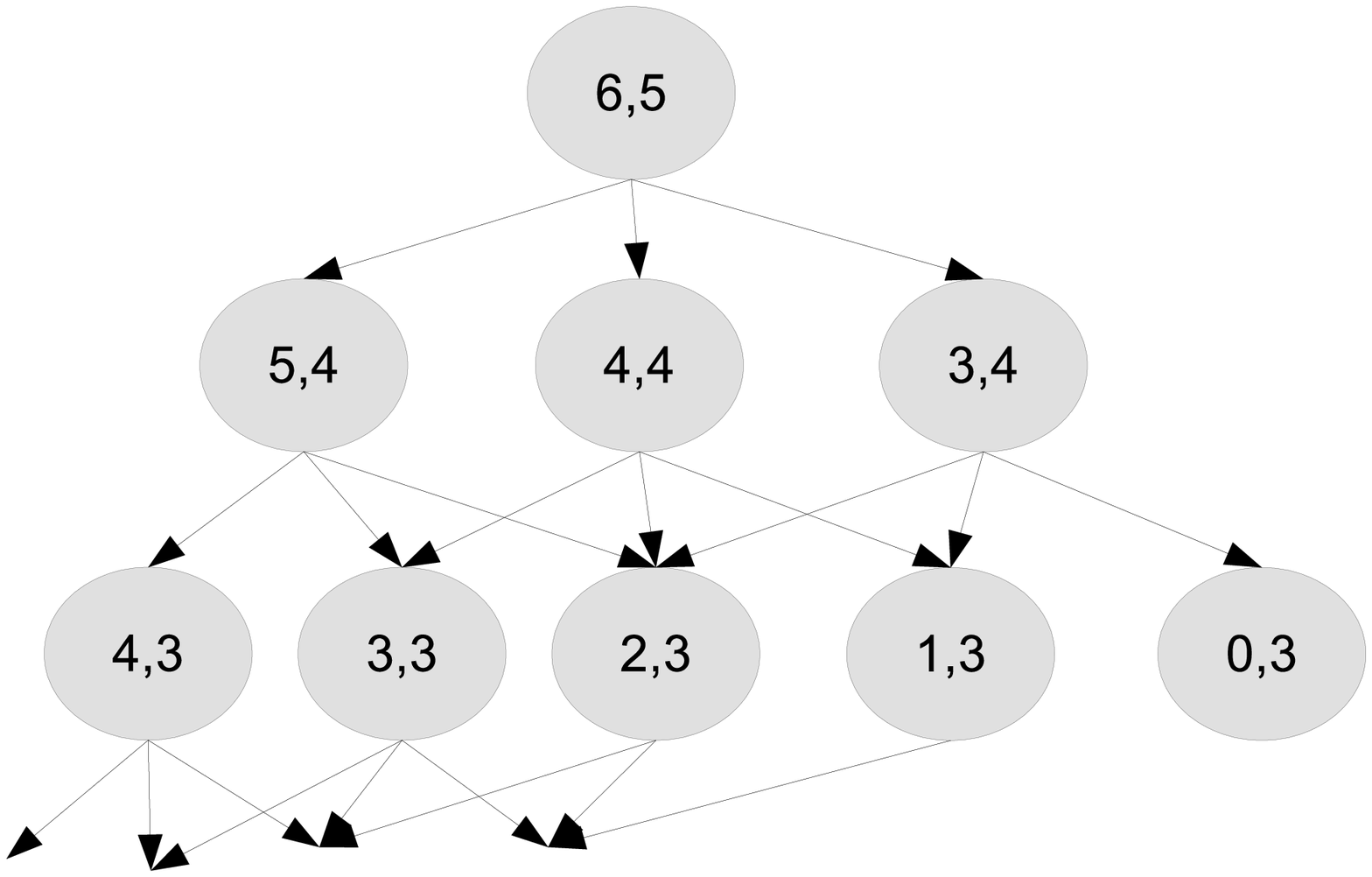}
  \caption{Generation trees induced by our four algorithms. Top left:
    Algorithm (V) (top) and algorithm (10) (bottom). Top right:
    Algorithm (O). Bottom: Top part of the tree of algorithm (6). The
    numbers in the nodes refer to input parameters of the algorithms
    in the recursive calls.}
  \label{fig:3}
\end{figure}

\bigskip

\textbf{Summary}: In my opinion, the Opdyke \emph{algorithm} is 
not general, and merely slow. 
The \emph{recursion formulae} are appealing, if
only they were not so well-known 
and simple. In my view, it has been 
unfortunate 
to describe these recursions as
`solutions to open problems', when in fact the integer composition
recursions are outlined on the second page of bibliography the author 
cites. 
Possibly, the paper 
can make other contributions to this
field in the future, but, in my opinion, these would still have to be
unveiled.\footnote{What I think the paper may be contributing is to
  outline a relationship between restricted integer compositions and
  Pascal's triangle, similar in spirit to what is shown in \cite{fielder}.}

\end{document}